\numberwithin{equation}{section}
\theoremstyle{plain}
\newtheorem{theorem}{Theorem}[section]
\newtheorem{lemma}[theorem]{Lemma}
\newtheorem{definition}[theorem]{Definition}
\newtheorem{corollary}[theorem]{Corollary}
\newcommand{\half}{\frac{1}{2}}
\newcommand{\MD}{\mathcal{MD}_+}
\newcommand{\decisionSpace}{\mathcal{D}}
\newcommand{\bitm}{\begin{itemize}}
\newcommand{\eitm}{\end{itemize}}
\newcommand{\benum}{\begin{enumerate}}
\newcommand{\eenum}{\end{enumerate}}
\newcommand{\kldist}[2]{D\! \left( #1\|#2 \right)}
\newcommand{\reals}{\mathbb{R}}
\newcommand{\parenth}[1] {\left(#1\right)}
\newcommand{\brackets}[1] {\left[#1\right]}
\newcommand{\beqa}{\begin{eqnarray}}
\newcommand{\eeqa}{\end{eqnarray}}
\newcommand{\beqas}{\begin{eqnarray*}}
\newcommand{\eeqas}{\end{eqnarray*}}
\newcommand{\E}{\mathbb{E}}
\def\argmin{\mathop{\arg\!\min}\limits}%
\def\argmax{\mathop{\arg\!\max}\limits}%
\newcommand{\alphabet}[1] { {\mathsf #1}}
\newcommand{\probSimplex}[1]{ \alphabet{P}\parenth{\alphabet{#1}}}
\newcommand{\cX}{\alphabet{X}}
\newcommand{\jac}{J}
\newcommand{\basisfn}[1]{\phi^{(#1)}}
\newcommand{\tp}{\tilde{p}}
\newcommand{\vect}[1]{\boldsymbol{#1}}
\begin{document}

\begin{frontmatter}
\title{Bayesian Lasso Posterior Sampling via Parallelized Measure Transport}
\runtitle{Bayesian Lasso via Parallelized Measure Transport}
\runauthor{M. Mendoza et al.}

\begin{aug}
\author{\fnms{Marcela} \snm{Mendoza}\thanksref{addr1}\ead[label=e1]{mpmendoz@eng.ucsd.edu}},
\author{\fnms{Alexis} \snm{Allegra}\thanksref{addr2}\ead[label=e2]{aallegra@ucsd.edu}}
\and
\author{\fnms{Todd P.} \snm{Coleman}\thanksref{addr1}
\ead[label=e3]{tpcoleman@ucsd.edu}
\ead[label=u1,url]{http://coleman.ucsd.edu/}}

\address[addr1]{Department of Bioengineering, University of California-San Diego
}

\address[addr2]{Department of Electrical Engineering, University of California-San Diego
}


\end{aug}

\begin{abstract}
It is well known that the Lasso can be interpreted as a Bayesian posterior mode estimate with a Laplacian prior. Obtaining samples from the full posterior distribution, the Bayesian Lasso, confers major advantages in performance as compared to having only the Lasso point estimate.  Traditionally, the Bayesian Lasso is implemented via Gibbs sampling methods which suffer from lack of scalability, unknown convergence rates, and generation of samples that are necessarily correlated. We provide a measure transport approach to generate i.i.d samples from the posterior by constructing a transport map that transforms a sample from the Laplacian prior into a sample from the posterior. We show how the construction of this transport map can be parallelized into modules that iteratively solve Lasso problems and perform closed-form linear algebra updates. With this posterior sampling method, we perform maximum likelihood estimation of the Lasso regularization parameter via the EM algorithm. We provide comparisons to traditional Gibbs samplers using the diabetes dataset of Efron et al. Lastly, we give an example implementation on a computing system that leverages parallelization, a graphics processing unit,  whose execution time has much less dependence on dimension as compared to a standard implementation.  

\end{abstract}

\begin{keyword}
\kwd{Bayesian Lasso}
\kwd{Optimal transport theory}
\kwd{convex optimization}
\kwd{distributed optimization}
\kwd{Monte Carlo sampling}
\end{keyword}

\end{frontmatter}

\section{Introduction} \label{sec:intro}



 A quintessential formulation for sparse approximation is Tibshirani's Lasso, which simultaneously induces shrinkage and sparsity in the estimation of regression coefficients \cite{tibshirani1996regression}. The formulation of the standard Lasso is as follows: 
\begin{equation} \label{eqn:l2l1}
x^*= \argmin_{x \in \mathbb{R}^d} ||y-  \Phi x ||_2^2 + \lambda||x||_1 
\end{equation}  
where $y \in \mathbb{R}^n$ is a vector of responses, $\Phi$ is a $n \times d$ matrix of standardized regressors, and $x \in \mathbb{R}^d$ is the vector of regressor coefficients to be estimated.  


    It is known that the Lasso can be interpreted as a Bayesian posterior mode estimate with a Laplacian prior \cite{tibshirani1996regression}.  Imposing a Laplacian prior is equivalent to $L_1$-regularization, which has desirable properties, including robustness and logarithmic sample complexity \cite{ng2004feature}. Various algorithms for solving \eqref{eqn:l2l1} are typically employed, including Least Angle Regression (LARS), iterative soft-thresholding and its successors, and iteratively reweighted least squares(IRLS)  \cite{efron2004least}, \cite{beck2009fast},\cite{friedman2007pathwise}, \cite{friedman2010regularization}. These methods are scalable, yet they only provide a point (maximum a posteriori) estimate.  With i.i.d. samples  $(Z_i: i \geq 1)$ from the posterior distribution, the Bayes optimal decision, $d^*(y)$ can be approximately computed
for any set of possible decisions $\decisionSpace$ and any loss function $l: \reals^d \times \mathcal{D} \rightarrow \reals$ by minimizing the empirical conditional expectation:
\begin{equation} 
d^*(y)=\argmin_{d \in \decisionSpace} \E[l(X,d)|Y=y] \simeq \argmin_{d \in \decisionSpace} \frac{1}{N}\sum_{i=1}^K l(Z_i,d) \label{eqn:empiricalRiskMinimization}
\end{equation}

 Previous approaches have been developed \cite{park2008bayesian,hans2009bayesian} to sample from the posterior distribution corresponding to the Lasso problem based on Markov Chain Monte Carlo methods (MCMC).  However these methods necessarily introduce correlations between the generated samples, are sequential in nature, and do not often scale well with dataset size or model complexity \cite{hastings1970monte},\cite{metropolis1953equation}, \cite{lee2010utility}.


 We here consider a measure transport approach, where i.i.d. samples $(Z_i: i \geq 1)$ from the posterior distribution $P_{X|Y=y}$ associated with the Lasso are produced by first generating i.i.d. samples $(X_i: i \geq 1)$ from the Laplacian prior  $P_X$ and constructing a map $S_y$  that transforms a sample $X$ from $P_X$ into a sample $Z$ from $P_{X|Y=y}$.  From here, the posterior samples are constructed as $(Z_i = S_y(X_i): i \geq 1)$.  We exploit previous results that cast Bayesian posterior sampling from a measure transport perspective \cite{el2012bayesian}, \cite{Marzouk2016},
 that turn the construction of $S_y$ into a relative entropy minimization problem
 \cite{kim2013efficient}, and that exploit the Bayesian Lasso posterior’s log-concavity to turn construction of $S_y$ with polynomial chaos into a convex optimization problem \cite{kim2014dynamic} that is amenable to parallelization \cite{mesa2015scalable}.  In this Lasso setting, we further show that constructing the optimal  map to transform  prior samples to posterior samples can be performed with off-the-shelf Lasso solvers and closed-form linear algebra updates.


\subsection{Relevant Work}
Park and Casella proposed a Gibbs sampler for a variation of the original Bayesian Lasso problem, where the latent variance scale variable in the Gauss-scale mixture representation of the Laplacian distribution has a prior distribution \cite{park2008bayesian}.  This structure leads to a tractable three-step Gibbs sampler that can be used to draw approximate samples from the posterior and construct credibility intervals.  Hans \cite{hans2009bayesian} obviated the need for hyper-parameters and used a direct characterization of the posterior distribution to develop a Gibbs sampler to generate posterior samples.  As an MCMC algorithm, the Gibbs sampler generates a Markov chain of samples, each of which is correlated with its previous sample. The correlation between these samples can decay slowly and lead to burn-in periods where samples have to be discarded \cite{robert2013monte}.  Although theoretical upper bounds on the convergence of Gibbs samplers have been proved \cite{rajaratnam2015mcmc}, these guarantees are weaker in the case of Bayesian Lasso.  \cite{rajaratnam2015fast} developed a two-step Gibbs sampler for the Bayesian Lasso with improved convergence behavior.  However, a way to derive i.i.d. samples from the Bayesian Lasso posterior without burn-in periods has remained elusive.  

Moshely et al. first proposed an alternative method for directly sampling from the posterior distribution based on a measure transport approach \cite{el2012bayesian}, \cite{Marzouk2016} 
using a polynomial chaos expansion \cite{ghanem2003stochastic}. 
Bayesian inference can be cast as a special case of this, where the original distribution is the prior (which in many cases is easy to sample from) and the target distribution is the posterior.  Recently, Kim et al. further investigated the Bayesian transport sampling problem and showed that when the  prior and likelihood satisfy a log-concavity property, the relative entropy minimization approach to find a transport map with a polynomial chaos representation is a convex optimization problem \cite{kim2013efficient}. Mesa et al. introduced an Alternating Direction Method of Multipliers (ADMM) reformulation and showed that this minimization can be performed by iteratively solving a series of convex optimization problems in parallel \cite{mesa2015scalable}. Wang et al. used a measure transport approach to extend the randomize-then-optimize MCMC approach to sample from posteriors with $L1$ priors by transforming the $L1$ prior distribution to a Gaussian distribution \cite{wang2016bayesian}. 



\subsection{Our Contribution}
We present a technique to sample from the Bayesian Lasso posterior based on a measure transport approach \cite{el2012bayesian}, \cite{kim2013efficient}. The formulation is conceptually different from the Gibbs sampler methodology; our initial objective is not to compute samples from the posterior, but rather compute a transport map based upon training samples.  Once the transport map is computed, one can generate an arbitrary number of posterior samples by first computing samples from the prior and passing them through the transport map. We show that construction of the transport map can be performed in a parallelized fashion based on an Alternating Direction Method of Multipliers (ADMM) formulation, as in \cite{mesa2015scalable}.  Unique to the Lasso formulation, our solution only requires off-the-shelf Lasso solvers and linear algebra updates.  This provides opportunities to leverage computing architectures with parallelization and attain significant completion time improvements. 

  We exploit the ability to draw i.i.d. samples from the posterior to develop an Expectation Maximization (EM) algorithm for maximum likelihood estimation of $\lambda$ in \eqref{eqn:l2l1}. Additionally, we compare our results to a Bayesian Lasso Gibbs sampler from \cite{park2008bayesian} and show that we achieve similar results when analyzing the diabetes dataset presented in \cite{efron2004least}. Finally, we show that our  framework is amenable to implementation in architectures that leverage parallelization and provide performance of an example implementation with a graphics pocessing unit (GPU).  

The rest of the paper is organized as follows. In Section~\ref{sec:defns}, we provide some preliminaries and definitions. In Section~\ref{sec:theory}, we introduce a relative entropy minimization formulation for Bayesian Lasso posterior sampling via measure transport.  We consider transport maps described in terms of polynomial chaos and show how under a log-concavity assumption (which applies for the Lasso problem), the relative entropy minimization can be performed with ADMM methods from convex optimization.  We then show how, unique to the Bayesian Lasso, this formulation can be reduced to iteratively solving a collection of Lasso problems in parallel and performing closed-form linear algebra updates. In Section~\ref{sec:EM}, we exploit the ability to draw i.i.d. samples from the posterior to develop an Expectation Maximization (EM) algorithm for maximum likelihood estimation of $\lambda$ in \eqref{eqn:l2l1}. In Section~\ref{sec:comparison}, we compare our results to a Bayesian Lasso Gibbs sampler from \cite{park2008bayesian} and achieve similar results when analyzing the diabetes dataset from Efron et al in \cite{efron2004least}.  In Section~\ref{sec:applications}, we show that our  framework is amenable to implementation in architectures that leverage parallelization and provide performance of an example implementation with a graphics pocessing unit (GPU).  In Section~\ref{sec:discussion-conclusion}, we conclude and discuss future potential directions.

\section{Definitions} \label{sec:defns}
\subsection{Bayesian Lasso Statistical Model}
We consider the following generative model of how a latent and sparse random vector $X \in \reals^d$ relates to a measurement $Y \in \reals^n$:
\begin{equation}
Y= \Phi X + \epsilon  \label{eqn:measurementModel}
\end{equation}
and the measurement noise satisfies $\epsilon \sim \mathcal{N}(0, \sigma^2 I)$.


We assume an i.i.d. Laplacian statistical model on $X$ with parameter $\tau$.  Therefore, the following Bayesian Lasso regression model is specified as:
\beqa
p(y|x; \sigma ^2) & = & \mathcal{N}(y;\Phi x, \sigma^2 \vect{I}_n )\label{eqn:Likelihood}\\
p(x; \tau)& = &\prod_{i=1}^d \frac{\tau}{2} e^{-\tau |x_i|} \label{eqn:LaplacianPrior}
\eeqa 
where $\mathcal{N}(t;\mu,\Sigma)$ represents the density function, evaluated at $t$, of a multivariate normal random variable with expectation $\mu$ and covariance matrix $\Sigma$. 
We note that the negative log posterior density satisfies
\beqa
 -\log p(x|y; \sigma^2, \tau ) \propto \frac{1}{2 \sigma^2} \|y-\Phi x\|_2^2 + \tau||x||_1  \label{eqn:posteriorBLASSO}
\eeqa
As such, the standard Lasso problem for a given $\lambda \equiv 2\tau\sigma^2$
\begin{equation} \label{eqn:l2l1_2}
x^*= \argmin_{x \in \mathbb{R}^d} ||y-  \Phi x ||_2^2 + \lambda||x||_1 
\end{equation}
is a maximum a posteriori estimation problem for the Laplacian prior in \eqref{eqn:LaplacianPrior}.

In the model above and throughout using our methodology, we assume that the parameter $\sigma^2$ is fixed and known, deviating from the results from the Bayesian Lasso Gibbs sampler first presented in \cite{park2008bayesian}, for which $\sigma^2$ is imparted with a prior. As such, we are considering the posterior distribution associated with the original Bayesian interpretation to Lasso, for which the solution to \eqref{eqn:l2l1} is the MAP estimate.
We will discuss the assumptions associated with \cite{park2008bayesian} in Section~\ref{sec:comparison}.





\subsection{Transport Maps}
\renewcommand{\cX}{\reals^d}

Define $\probSimplex{\cX}$ as the space of probability measures over $\cX$ endowed with the Borel sigma-field.

\begin{definition}[Push-forward]
Given $P \in \probSimplex{\cX}$ and $Q \in \probSimplex{\cX}$, we
say that the map $S: \cX \to \cX$ \textbf{pushes forward} $P$ to
$Q$ (denoted as $S \# P = Q$) if a random variable $X$ with
distribution $P$ results in $Z \triangleq S(X)$ having
distribution $Q$.
\end{definition}
This gives insight into the notion of ``transport maps'' because they transform random variables from one distribution to another.

We say that $S: \cX \to \cX$ is a {\it diffeomorphism} on $\cX$ if
$S$ is invertible and both $S$ and $S^{-1}$ are differentiable.
 We say that a diffeomorphism $S$ is {\it monotonic} on $\cX$ if
its Jacobian $J_S: \cX \to \reals^{d \times d}$ satisfies the property that $J_s(x)$ is positive definite for all $x \in \cX$, e.g.
\beqa
J_S(x) \succ 0 \quad \forall\; x \in \cX
\eeqa
Define the set of all monotonic diffeomorphisms as $\MD$.

With this, we have the following lemma from standard probability:
\begin{lemma}\label{lemma:SpushPtoQ:equiv:JacobianEqn}
Consider any $S \in \MD$ and $P$,
$Q\in \probSimplex{\cX}$ that both have the densities $p$, $q$ with
respect to the Lebesgue measure. Then $S \# P = Q$ if and only
if
\begin{eqnarray}
p(x) = q(S(x)) \det \parenth{J_S(x)} \quad \forall x \in \cX
\label{eqn:defn:JacobianEqn}
\end{eqnarray}
where $J_{S}(x)$ is the Jacobian matrix of the map $S$ at $x$.
\end{lemma}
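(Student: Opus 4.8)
The plan is to recognize this statement as the classical change-of-variables formula for densities and to prove both implications at once by reducing the push-forward condition to an integral identity that is manifestly reversible. First I would unpack the definition: by the Push-forward definition, $S \# P = Q$ holds exactly when $Z \triangleq S(X)$ has law $Q$ whenever $X \sim P$, which is equivalent to the set identity $P\parenth{S^{-1}(B)} = Q(B)$ for every Borel set $B \subseteq \reals^d$. Writing each side through its Lebesgue density, this reads
\[
\int_{S^{-1}(B)} p(x)\,dx = \int_B q(z)\,dz \qquad \forall\, B .
\]

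Next I would transform the right-hand integral by the substitution $z = S(x)$. Since $S \in \MD$ is a diffeomorphism, $S$ and $S^{-1}$ are continuously differentiable and the change-of-variables theorem applies with $dz = \abs{\det\parenth{J_S(x)}}\,dx$; because $S$ is \emph{monotonic}, its Jacobian satisfies $J_S(x) \succ 0$, so $\det\parenth{J_S(x)} > 0$ and the absolute value can be dropped. As $x$ ranges over $S^{-1}(B)$ the image $z = S(x)$ ranges over $B$, giving
\[
\int_B q(z)\,dz = \int_{S^{-1}(B)} q(S(x))\,\det\parenth{J_S(x)}\,dx ,
\]
so that $S \# P = Q$ is equivalent to equality of $\int_{S^{-1}(B)} p(x)\,dx$ and $\int_{S^{-1}(B)} q(S(x))\,\det\parenth{J_S(x)}\,dx$ for every Borel $B$.

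Then I would invoke that $S$ is a bijection of $\reals^d$: as $B$ runs over all Borel sets, $A \triangleq S^{-1}(B)$ runs over all Borel sets as well. Hence the previous identity is equivalent to
\[
\int_A p(x)\,dx = \int_A q(S(x))\,\det\parenth{J_S(x)}\,dx \qquad \text{for every Borel } A ,
\]
and by the standard fact that two locally integrable functions with identical integrals over every Borel set agree almost everywhere, this forces $p(x) = q(S(x))\,\det\parenth{J_S(x)}$. Every step in this chain is an equivalence, so the same argument delivers both the ``only if'' and the ``if'' directions simultaneously.

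The care points, rather than a single hard obstacle, are twofold. The genuinely essential hypothesis is monotonicity: positive definiteness of $J_S$ is exactly what licenses replacing $\abs{\det\parenth{J_S(x)}}$ by $\det\parenth{J_S(x)}$, so that the formula holds with the signed determinant as stated. The second, more cosmetic, point is the gap between ``almost everywhere'' and the lemma's ``$\forall x$''; I would note that densities are only determined up to a.e. equivalence and that, under the diffeomorphism hypotheses, both $p$ and $x \mapsto q(S(x))\,\det\parenth{J_S(x)}$ may be taken continuous, so that a.e. equality upgrades to pointwise equality (equivalently, one simply reads the conclusion as holding Lebesgue-a.e., which is all that later results require).
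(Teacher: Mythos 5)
The paper itself offers no proof of this lemma --- it is invoked as ``a lemma from standard probability'' --- so there is no in-paper argument to compare against; your proposal supplies exactly the standard argument the paper is implicitly relying on, and it is correct. Your chain of equivalences is sound: push-forward unpacked as $P\left(S^{-1}(B)\right)=Q(B)$ for all Borel $B$, change of variables $z=S(x)$ in the $Q$-integral, monotonicity ($J_S(x)\succ 0$, hence $\det J_S(x)>0$) used precisely to drop the absolute value so the signed determinant appears as in the statement, bijectivity of $S$ to let $S^{-1}(B)$ exhaust all Borel sets, and the a.e.-equality of functions with matching integrals. The one soft spot is your closing claim that $p$ and $x\mapsto q(S(x))\det\left(J_S(x)\right)$ ``may be taken continuous'': a Lebesgue density need not admit a continuous version in general, so this upgrade is not available without further hypotheses. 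Your fallback reading --- that the displayed identity should be understood Lebesgue-a.e., densities being defined only up to null sets --- is the correct resolution and is all the paper's subsequent use of the lemma (the relative-entropy objective, which is an integral) requires. A second minor remark: the change-of-variables theorem you cite is usually stated for $C^1$ diffeomorphisms, whereas the paper's definition of $\mathcal{MD}_+$ only demands differentiability of $S$ and $S^{-1}$; this gap is cosmetic here, since every map actually constructed in the paper (polynomial chaos expansions) is smooth.
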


We note from a classical result in optimal transport theory \cite{brenier1991polar,villani2008optimal} that if $P$ and $Q$ have densities $p$ and $q$ with respect to the Lebesgue measure, there will always exists a transport map $S$ that is a monotonic diffeomorphism (e.g. $S \in MD$)
that pushes $P$ to $Q$.

Note that in the case of Bayesian Lasso, $P\equiv P_X$ is the prior distribution on $X$, which has a Laplacian density $p$ given by
\eqref{eqn:LaplacianPrior}, and $Q\equiv P_{X|Y=y}$ is the posterior distribution on $X$, which has a density described up to a proportionality constant by \eqref{eqn:posteriorBLASSO}.

\section{Bayesian Lasso via Measure Transport} \label{sec:theory}

In this section, we provide background on measure transport theory and show that for the Bayesian LASSO,  we can efficiently find a transport map that transforms samples from the prior distribution in \eqref{eqn:LaplacianPrior} to samples from the posterior.  We utilize the ADMM framework introduced in \cite{mesa2015scalable} and develop a Bayesian Lasso transport map by iteratively solving a collection of Lasso problems (which themselves can be solved with existing efficient sparse approximation algorithms) in parallel and performing linear algebra updates.  

As an alternative to the Gibbs sampling approaches described previously, we consider finding a transport map $S$ that pushes the prior $P=P_X$ to the posterior $Q=P_{X|Y=y}$.  Throughout, we will assume $p$ and $q$ are the densities associated with the prior and posterior, respectively.
Given such a map, we can sample from the posterior distribution by transforming i.i.d. samples $(X_i: i \geq 1)$ with $S_y$  to i.i.d. samples $(Z_i = S_y(X_i): i \geq 1)$ from the posterior.  Figure \ref{fig:priortoposterior} shows the effect of a transport map on samples for the Bayesian Lasso in \eqref{eqn:LaplacianPrior}.

\begin{figure*}
\centering
\begin{subfigure}{.5\textwidth}
  \centering
  \includegraphics[width=\linewidth]{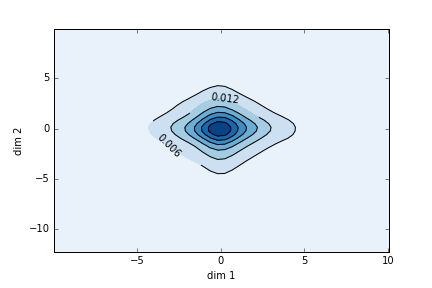}
  \caption{ }
  \label{fig:prior}
\end{subfigure}%
\begin{subfigure}{.5\textwidth}
  \centering
  \includegraphics[width=\linewidth]{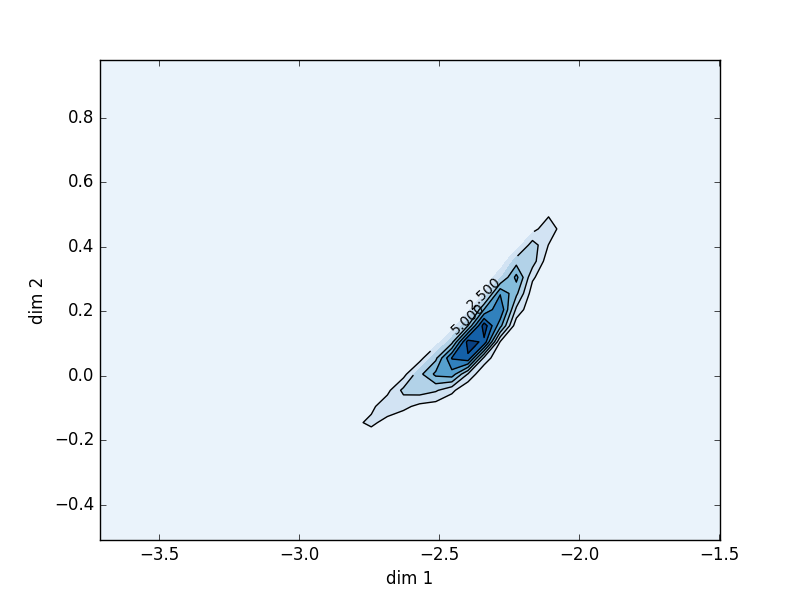}
  \caption{ }
  \label{fig:posterior}
\end{subfigure}

\caption{$\textbf{ Effect of transport map $S$ on prior samples}$  (a) kernel density estimate of prior (Laplacian) distribution constructed by samples (b) kernel density estimate of samples transformed through transport map $S$; posterior density }
\label{fig:priortoposterior}
\end{figure*}

\subsection{A Convex Optimization Formulation}

Recent work \cite{kim2013efficient,kim2014dynamic} has shown that for the problems where the prior density $p(x)$ and likelihood $p(y|x)$ are log-concave in $x$, constructing a transport map that pushes $P_X$ to $P_{X|Y=y}$ can be performed with convex optimization.

Consider fixing $q$ as the true posterior density. Given an arbitrary $S \in \MD$, then there will be an induced $\tilde{P}_S$ (with density $\tp_S$( for which $S$ pushes $\tilde{P}_S$ to $Q$. That is, from the Jacobian equation: 
\begin{equation}
 \tp_S(u) = q(S(u)) \det(J_S(u)) \quad \text{ for all } u \in \reals^d \label{eqn:defn:arbJacobianEqn2}
\end{equation}
For an arbitrary $S$, $\tilde{P}_S$ need not be the same as $P$.  From this perspective, we re-define our problem as finding the transport map $S^*$ that minimizes a distance between $P$ and the induced $\tilde{P}_S$.  We use the Kullback-Leibler divergence and arrive at the following optimization problem: 
\begin{equation}
   S^* = \argmin_{S \in \MD} \kldist{P}{\tilde{P}_S} \label{eq:minKL} 
\end{equation}

By defining 
\begin{equation}
g(x) \triangleq -\log   f_{Y|X}(y|x) -  \log f_X(x) 
\end{equation}
where the densities refer to the likelihood and prior, \eqref{eq:minKL} becomes

\begin{equation}
 S^* = \argmin_{S \in \MD} \E_P \brackets{ g(S(X)) - \log\det \parenth{J_{S(X)}} } \label{eq:gen-push-for-obj}
\end{equation}

Moreover, when $q$ is log-concave (equivalently when $g$ is convex), this (infinite-dimensional) optimization problem is convex.    Moreover, note that the proportionality constant in the denominator of the posterior density for Bayes rule is not required to perform this minimization, because it does not vary with $x$.

\noindent \textbf{Parametrization of the Transport Map:}
 In order to solve \eqref{eq:gen-push-for-obj}, we parametrize the problem to arrive at a finite-dimensional convex optimization problem. We approximate any $S \in \MD$ as a linear
combination of basis functions through a Polynomial Chaos Expansion (PCE) \cite{xiu2002wiener}, \cite{ernst2012convergence} where $\phi$ are the polynomials orthogonal with respect to the prior $P$:
\beqa
  &&S(x) = \sum_{j \in \mathcal{J}}b_j \basisfn{j}(x)
  \label{eqn:defn:linearbasis}\\
  &&\int_{x \in \cX} \basisfn{i}(x)\basisfn{j}(x)  p(x) dx = \delta _{i,j} \label{eq:PCE}
\eeqa
with $\delta_{i,j}$ being $1$ if $i=j$ and $0$ otherwise. 
Now define  $K=|\mathcal{J}|$ and we have that: \beqa
  F         &=& [b_1,\ldots, b_K],  \qquad\qquad\qquad\;\;\; d \times K \label{eqn:polynomialchoas:a}\\
  A(x)      &=& [ \phi^{(1)}(x),\ldots, \phi^{(K)}(x)]^T, \qquad K \times 1 \label{eqn:polynomialchoas:b}\\
  S(x)      &=& FA(x) , \qquad\qquad\qquad\qquad\;\;\; d \times 1 \label{eqn:polynomialchoas:c} \\
  \jac(x) &=& \brackets{\frac{\partial  \phi^{(i)}}{\partial x_j}(x)}_{i,j},  \qquad\qquad\quad  K \times d\label{eqn:polynomialchoas:d}\\
  \jac_S(x) &=& F\jac(x)  \qquad\qquad\qquad\qquad\;\; d \times d\label{eqn:polynomialchoas:e}.
\eeqa
We can then approximate the expectation from \eqref{eq:gen-push-for-obj} using an empirical expectation based upon i.i.d. samples $(X_i: i \geq 1)$ from the prior $P$. Within the context of the Bayesian Lasso, $P$ is the Laplacian (doubly exponential) distribution which is easy to sample from. Letting $A_i \triangleq A(X_i) \in \reals^{K \times 1}$ and $J_i \triangleq J(X_i) \in \reals^{K \times d}$, we arrive at the following finite-dimensional problem: 
\beqa
  F^*  = \argmax_{F:FJ_i \succ 0} \frac{1}{N} \sum_{i=1}^N  g(FA_i) -\log\det\parenth{F\jac_i}\label{eq:BayesF}
\eeqa
Whenever $q$ is log-concave (equivalently $g$ is convex), this is a finite-dimensional convex optimization problem.  Moreover, as $K \to \infty$, from the PCE theory, the map $F^*A(x)$ converges (in the relative entropy sense) to a transport map $S^*$ that pushes $P$ to $Q$ (see \cite{kim2015tractable}).

\subsection{Parallelized Convex Solver with ADMM}
More recently, \cite{mesa2015scalable} demonstrated a scalable framework to solve \eqref{eq:BayesF} which only requires
iterative linear algebra updates and solving, in parallel, a number of quadratically regularized point estimation problems. The distributed architecture involves an augmented Lagrangian and a concensus Alternating Direction Method of Multipliers (ADMM) formulation:  
\beqas
\min_{F,Z,p,B} &   \frac{1}{N} \sum_{i=1}\limits^N g(p_i) -\log \det Z_i  + &\half \rho \|F_i -B \|_2^2  \\
                      &+ \frac{1}{N} \sum\limits_{i=1}^N \half \rho \| BA_i - p_i\|_2^2 +  &\half \rho \| BJ_i - Z_i\|_2^2 \\
\text{s.t.}  & BA_i = p_i: \qquad\qquad &\gamma_i \quad (d \times 1) \\
             & BJ_i = Z_i: \qquad\qquad &\beta_i \quad (d \times d) \\
             & F_i - B = 0: \qquad \quad & \alpha_i \quad (d \times K) \\
             & Z_i \succ 0 \qquad \qquad \qquad&
\eeqas
for any fixed $\rho>0$.
\newcommand{\goleftstuff}{\!\!\!}

A penalized Lagrangian is solved iteratively by first solving for $B^{k+1}$
\begin{align}
B^{k+1}=&\frac{1}{N} \sum_{i=1}^{N} \big[ \rho\left(F_i^k + p^{k}_i 
A_i^{T}+Z_i^{k}J_i^{T}\right) \nonumber\\
   	\quad&+\gamma_i^{k}A_i^{T}+\beta_{i}^{k}J_i^{T}+\alpha_{i}^{k} \big]  \mathcal{M}, \label{eqn:ADMM:B} \\
\mathcal{M} \triangleq& \left[ \rho \left ( I + \frac{1}{N}\sum_{i=1}^{N} A_i A_i^{T} +
J_i J_i^{T} \right ) \right]^{-1} \label{eqn:ADMM:inverse} 
\end{align}
and then solving, in parallel for $1 \leq i \leq N$, the other variable updates:
\begin{subequations}
\label{eqn:ADMM}
\begin{align}
 F_i^{k+1} =& -\frac{1}{\rho} \alpha^{k}_i+B^{k+1} \label{eqn:ADMM:F} \\
 Z_i^{k+1} =&Q\tilde{Z}_{i}Q^T  \label{eqn:ADMM:Z} \\
p_i^{k+1} =& \argmin_{p_i} g(p_i) +   \half \rho \| B^{k+1} A_i - p_i\|_2^2 \nonumber\\
&+ \gamma_i^{kT} (p_i-B^{k+1} A_i ) \label{eqn:ADMM:p}\\
\gamma_i^{k+1} =& \gamma_i^k + \rho (p_i^{k+1}-B^{k+1}A_i)  \label{eqn:ADMM:gamma} \\
\beta_i^{k+1} =& \beta_i^k + \rho(Z_i^{k+1} - B^{k+1}J_i)  \label{eqn:ADMM:lambda} \\
\alpha_i^{k+1} =& \alpha_i^k + \rho(F_i^{k+1}-B^{k+1})\label{eqn:ADMM:alpha}
\end{align}
\end{subequations}
ADMM guarantees convergence to the optimal solution \cite{boyd2011distributed}. 
To emphasize, each $i$th update in \eqref{eqn:ADMM} can be solved in parallel. As  \eqref{eqn:ADMM:Z} is an eigenvalue-eigenvector decomposition (details can be found in \cite{mesa2015scalable}), it follows that all the updates involve linear algebra  with the exception of \eqref{eqn:ADMM:p}, which is a quadratically regularized point estimation problem.

\subsection{Efficiently Solving the Bayesian Lasso}
 We exploit the unique problem structure of Bayesian Lasso to simplify a scalable implementation. 

\begin{lemma}
The PCE for the Laplacian distribution is $\phi_L(x)=\phi_E(|x|)$ where $\phi_E$ are the Laguerre polynomials.
\end{lemma}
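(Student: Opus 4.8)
The plan is to reduce the orthonormality condition \eqref{eq:PCE} over $\reals$ with the Laplacian weight to the classical orthonormality of the Laguerre polynomials over the half-line with the exponential weight, exploiting the symmetry of the Laplacian density together with the change of variable $w=\abs{x}$.

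First I would record the one-dimensional prior density $p(x)=\frac{\tau}{2}e^{-\tau\abs{x}}$ implied by \eqref{eqn:LaplacianPrior} and establish the key distributional fact: if $X\sim P$ then $W\triangleq\abs{X}$ is exponentially distributed with rate $\tau$, i.e.\ has density $\tau e^{-\tau w}$ on $[0,\infty)$. This follows immediately from $\prob{\abs{X}\le w}=\int_{-w}^{w}\frac{\tau}{2}e^{-\tau\abs{x}}\,dx = 1-e^{-\tau w}$. Since the Laguerre polynomials $\phi_E^{(n)}$ are by definition orthonormal for the weight $e^{-t}$ on $[0,\infty)$, this exponential law is exactly their weight once the rate is absorbed into the argument, which is what makes $\abs{\cdot}$ the correct change of variable.

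The core computation is then direct. Setting $\phi_L^{(n)}(x)=\phi_E^{(n)}(\tau\abs{x})$ and substituting into \eqref{eq:PCE}, I would note the integrand $\phi_L^{(i)}(x)\phi_L^{(j)}(x)p(x)$ is even in $x$, fold the integral onto $[0,\infty)$ picking up a factor of $2$, and change variables $t=\tau x$:
\begin{equation*}
\int_{\reals}\phi_E^{(i)}(\tau\abs{x})\,\phi_E^{(j)}(\tau\abs{x})\,\frac{\tau}{2}e^{-\tau\abs{x}}\,dx = \int_{0}^{\infty}\phi_E^{(i)}(t)\,\phi_E^{(j)}(t)\,e^{-t}\,dt = \delta_{i,j},
\end{equation*}
which is precisely \eqref{eq:PCE}. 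The rate $\tau$ is carried entirely inside the argument $\tau\abs{x}$, so up to this fixed rescaling the orthogonal system is exactly $\phi_E$ composed with $\abs{\cdot}$, as claimed. The $d$-dimensional statement then follows because the prior \eqref{eqn:LaplacianPrior} factorizes over coordinates, so a tensor product of the one-dimensional systems is orthonormal for $p$.

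The step I expect to be the main obstacle is \emph{completeness} rather than orthonormality: every $\phi_E^{(n)}(\abs{x})$ is an even function, so this family spans only the even subspace of $L^2(\reals,p)$, whereas the posterior is generically asymmetric and the transport map in \eqref{eqn:defn:linearbasis} must be able to represent odd components. I would resolve this by exhibiting the companion odd family $\mathrm{sgn}(x)\,\phi_E^{(n)}(\abs{x})$, checking by the same $w=\abs{x}$ reduction that it too is orthonormal and that each odd element is orthogonal to every even element (their product is odd, hence integrates to zero against the symmetric weight). Together the even and odd families constitute a complete orthonormal system, so that the Laguerre-in-$\abs{x}$ construction genuinely yields a PCE basis and not merely an orthonormal subset.
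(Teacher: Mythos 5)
Your orthonormality computation is exactly the paper's proof: write the Laplacian density as half the exponential density evaluated at $|x|$, fold the integral onto $[0,\infty)$ using evenness of the integrand (the factor of $2$), and invoke orthonormality of the Laguerre polynomials with respect to the exponential weight. Your explicit bookkeeping of the rate $\tau$ via the substitution $t=\tau x$, and the tensorization over coordinates for the $d$-dimensional prior, are details the paper leaves implicit; neither changes the argument. Where you genuinely add something is the completeness point, and it flags a real gap in the lemma as the paper states and proves it: every $\phi_E^{(n)}(|x|)$ is an even function, so the family is merely an orthonormal set spanning the even subspace of $L^2(\reals,p)$, and it cannot by itself serve as a basis for the expansion \eqref{eqn:defn:linearbasis} of the transport map --- a monotone map necessarily has a nontrivial odd part (even the identity $S(x)=x$ lies outside the even span). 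Your repair by adjoining the odd family $\mathrm{sgn}(x)\,\phi_E^{(n)}(|x|)$ is correct: odd--odd orthonormality holds since $\mathrm{sgn}(x)^2=1$ almost everywhere, even--odd orthogonality holds since the product is odd against a symmetric weight, and completeness of the union reduces, by the same $w=|x|$ substitution applied separately to the even and odd parts of any $f\in L^2(\reals,p)$, to the classical completeness of the Laguerre system for the exponential weight. One further caveat you could record: neither $\phi_E^{(n)}(|x|)$ nor $\mathrm{sgn}(x)\,\phi_E^{(n)}(|x|)$ is a polynomial in $x$ (odd powers of $|x|$ appear), so what results is an orthonormal chaos expansion with piecewise-polynomial basis functions rather than a PCE in the strict Wiener--Askey sense; this is harmless for the paper's algorithm, which uses only orthonormality, completeness, and almost-everywhere differentiability of the basis, but it shows the lemma's phrasing is loose in precisely the way your completeness discussion makes visible.
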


\begin{proof}
\begin{align}
\int\limits_{- \infty}^{\infty} \phi_E^i(|x|) \phi_E^j(|x|) p_L(x) dx &=
\int\limits_{- \infty}^{\infty} \phi_E^i(|x|) \phi_E^j(|x|) \frac{1}{2} p_E(|x|) dx \nonumber 
\\
&= 2 \int\limits_{0}^{\infty} \phi_E^i(x) \phi_E^j(x) \frac{1}{2} p_E(x) dx \nonumber \\
&= \delta _{i,j}
\end{align}
Where the first equality holds because the Laplacian density $p_L(x)$ is related to the exponential density $p_E(x)$ by  $p_L(x) = \frac{1}{2} p_E(|x|)$, the second equality holds by symmetry of the function being integrated, and the third follows because the PCE for the exponential distribution is obtained with the Laguerre polynomials $\phi_E^{(j)}$ \cite{xiu2002wiener}.
\end{proof}

We now show that for Bayesian Lasso, the only ADMM update that is not linear algebra is simply a Lasso problem.  
\begin{theorem}
For the Bayesian Lasso statistical model given by \eqref{eqn:posteriorBLASSO}, the ADMM update \eqref{eqn:ADMM:p} is a d-dimensional Lasso point estimation problem:
\begin{eqnarray}
p_i^{k+1} &=& \argmin_{p_i}  ||\hat{y}- \hat{\Phi}^T p_i||^2_2 + \lambda ||p_i||_1 \label{eqn:lassoADMM}
\end{eqnarray} 
where $\hat{\Phi}$ and $\hat{y}$ satisfy
\begin{eqnarray}
\hat{\Phi} ^T\hat{\Phi}&=& \Phi^T \Phi + \frac{1}{2} \rho I \label{eqn:defn:theorem:BayesLasso:a}\\
\hat{y} &=& \parenth{ \brackets{y^T\Phi + \half \rho (B^{k+1}A_i)^T -\frac{1}{2}\gamma_i^{kT}} \hat{\Phi}^+}^T \nonumber 
\end{eqnarray}
and $\hat{\Phi}^+$ represents the pseudo-inverse.
\label{thm:BayesianLASSO:ADMM}
\end{theorem}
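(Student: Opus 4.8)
The plan is to substitute the Bayesian Lasso $g$ into the subproblem \eqref{eqn:ADMM:p} and show that, after discarding terms that do not depend on $p_i$, what remains is a quadratic-plus-$\ell_1$ objective that completes the square into the standard Lasso form \eqref{eqn:lassoADMM}. From \eqref{eqn:Likelihood}, \eqref{eqn:LaplacianPrior}, and \eqref{eqn:posteriorBLASSO}, the function $g(p) = -\log f_{Y|X}(y\mid p) - \log f_X(p)$ equals $\frac{1}{2\sigma^2}\|y-\Phi p\|_2^2 + \tau\|p\|_1$ up to an additive constant. Since an $\argmin$ is invariant under adding a constant and under multiplication by a fixed positive scalar, I would first normalize so that the likelihood term reads $\|y-\Phi p_i\|_2^2$ and the prior term reads $\lambda\|p_i\|_1$, using the identification $\lambda = 2\tau\sigma^2$ already made at \eqref{eqn:l2l1_2}. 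The $\ell_1$ term then passes through untouched and becomes the $\lambda\|p_i\|_1$ of \eqref{eqn:lassoADMM}, so the work reduces to the three terms of \eqref{eqn:ADMM:p} that are quadratic or linear in $p_i$.

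The heart of the argument is to fold those three terms --- the Gaussian residual $\|y-\Phi p_i\|_2^2$, the ADMM proximal penalty $\frac12\rho\|B^{k+1}A_i - p_i\|_2^2$, and the dual linear term $\gamma_i^{kT}(p_i - B^{k+1}A_i)$ --- into a single least-squares residual $\|\hat y - \hat\Phi^T p_i\|_2^2$ plus a $p_i$-independent constant. Collecting the quadratic contributions gives $p_i^T(\Phi^T\Phi + \frac12\rho I)p_i$, which pins down the design matrix through $\hat\Phi^T\hat\Phi = \Phi^T\Phi + \frac12\rho I$, i.e.\ \eqref{eqn:defn:theorem:BayesLasso:a}. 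Collecting the linear contributions gives the coefficient vector $-2\Phi^Ty - \rho B^{k+1}A_i + \gamma_i^k$; matching it against the cross term $-2(\hat\Phi\hat y)^T p_i$ of the target residual forces $\hat\Phi\hat y = \Phi^Ty + \frac12\rho B^{k+1}A_i - \frac12\gamma_i^k$, and inverting $\hat\Phi$ reproduces exactly the $\hat y$ in the statement.

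Before inverting I would verify that $\hat\Phi$ is well defined. Because $\rho>0$, the matrix $\Phi^T\Phi + \frac12\rho I$ is symmetric positive definite, hence admits a unique symmetric positive-definite square root $\hat\Phi$, which is square and invertible; thus the pseudo-inverse $\hat\Phi^+$ in the statement coincides with $\hat\Phi^{-1}$, and symmetry gives $(\hat\Phi^+)^T = \hat\Phi^+$, reconciling the transposed row-vector form of $\hat y$ in the theorem with the column form $\hat\Phi^+(\Phi^Ty + \frac12\rho B^{k+1}A_i - \frac12\gamma_i^k)$ derived above. I would close by confirming that the leftover pieces ($\hat y^T\hat y$, $y^Ty$, and the terms involving only $B^{k+1}A_i$ and $\gamma_i^k$) are constant in $p_i$ and so drop from the $\argmin$, leaving precisely $\argmin_{p_i}\|\hat y - \hat\Phi^T p_i\|_2^2 + \lambda\|p_i\|_1$.

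The main obstacle is bookkeeping the multiplicative constants rather than any deep step. Since the relative-entropy objective \eqref{eq:gen-push-for-obj} couples $g$ to the $-\log\det J_S$ term, $g$ may not be rescaled at the level of the full problem; the rescaling by $2\sigma^2$ is legitimate only inside the isolated subproblem \eqref{eqn:ADMM:p}, and care is needed so that this factor interacts with $\rho$ to leave the quadratic coefficient at $\frac12\rho I$ and the linear coefficients at the half-weights $\frac12\rho$ and $\frac12$ seen in $\hat y$. Once this normalization convention is fixed, the remainder is the routine complete-the-square computation above, with positive definiteness of $\Phi^T\Phi + \frac12\rho I$ guaranteeing that every inverse used is legitimate.
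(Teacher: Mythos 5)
Your proposal is correct and takes essentially the same route as the paper's own proof: substitute the Bayesian Lasso $g$ into \eqref{eqn:ADMM:p}, collect the quadratic part $p_i^T\parenth{\Phi^T\Phi + \frac{1}{2}\rho I}p_i$ and the linear part $\parenth{\gamma_i^{kT} - 2y^T\Phi - \rho (B^{k+1}A_i)^T}p_i$, factor the Gram matrix, and complete the square, with the $\ell_1$ term passing through unchanged. The only divergence is the choice of factorization --- the paper builds $\hat{\Phi}$ via a Cholesky decomposition zero-padded to an $n \times d$ matrix, whereas you take the symmetric positive-definite square root, which makes $\hat{\Phi}$ square and invertible so that $\hat{\Phi}^+ = \hat{\Phi}^{-1}$ --- but both satisfy \eqref{eqn:defn:theorem:BayesLasso:a} and yield the same $\hat{y}$, and your explicit handling of the $\sigma^2$ normalization (which the paper leaves implicit) is a point in your favor rather than a gap.
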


\newcommand{\quadratic}{\text{quad}}

\begin{proof}
Dropping (i) superscript indices of \eqref{eqn:ADMM:p}, we have 
\begin{align}
p^*  
	 &=\argmin_p \;\; \quadratic(p) + \lambda ||p||_1, \nonumber\\
\quadratic(p) &\triangleq p^T(\Phi^T \Phi + \frac{1}{2}\rho I )p +(\gamma^T-2y^T\Phi - \rho (BA)^T)p. \nonumber \\
              &=p^T \hat{\Phi}^T \hat{\Phi} p +(\gamma^T-2y^T\Phi - \rho (BA)^T)p \label{eqn:proof:theoremBayesLasso:b}
\end{align}
where \eqref{eqn:proof:theoremBayesLasso:b} follows from
performing a Cholesky decomposition to build a unique $\tilde{\Phi} \in \reals^{d \times d}$ and then zero padding to build
$\hat{\Phi} \in \mathbb{R}^{n \times d}$, obeying the relationship given in \eqref{eqn:defn:theorem:BayesLasso:a}.  Then we complete the square in order to get an equation of the form $\| \hat{\Phi} p\|_2^2 - 2 \hat{y}^T \hat{\Phi} p + \|\hat{y} \|_2^2 = \|\hat{y} - \hat{\Phi} p \|_2^2$:
\begin{eqnarray*}
-2\hat{y}^T\hat{\Phi}p = (\gamma ^T-2y^T \Phi - \rho (BA)^T  ) p.
\end{eqnarray*}
\end{proof}

\begin{corollary}
For the Bayesian Lasso, the problem of finding a map $S^*$ to generate i.i.d. samples from $P_{X|Y=y}$ solved by iteratively solving for linear algebra updates and solving, in parallel, a collection of $d$-dimensional Lasso problems \eqref{eqn:l2l1}.   
\end{corollary}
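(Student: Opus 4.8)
The plan is to assemble the pieces developed in this section, since the corollary is a synthesis of the convex reformulation, the ADMM decomposition, and the preceding theorem rather than a new computation. First I would recall that, by \eqref{eq:minKL} and \eqref{eq:gen-push-for-obj}, finding the map $S^*$ that pushes the Laplacian prior $P$ to the posterior $Q = P_{X|Y=y}$ is cast as minimizing $\kldist{P}{\tilde{P}_S}$ over $S \in \MD$. The Bayesian Lasso negative log posterior \eqref{eqn:posteriorBLASSO} is a quadratic plus an $\ell_1$ term, so the associated $g$ is convex; hence the PCE-parametrized objective \eqref{eq:BayesF} is a finite-dimensional convex program whose minimizer $F^*$ yields the approximate map $F^* A(x)$, which converges to $S^*$ in relative entropy as $K \to \infty$ by PCE theory \cite{kim2015tractable}.

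Second, I would invoke the ADMM consensus reformulation of \cite{mesa2015scalable} that solves \eqref{eq:BayesF} through the iterations \eqref{eqn:ADMM:B}--\eqref{eqn:ADMM:alpha}. Here I would check update by update that every step except \eqref{eqn:ADMM:p} is closed-form linear algebra: the central update \eqref{eqn:ADMM:B} is an explicit average followed by multiplication by the precomputed inverse $\mathcal{M}$; the updates for $F_i$ and for the dual variables $\gamma_i,\beta_i,\alpha_i$ are affine; and the semidefinite-constrained update \eqref{eqn:ADMM:Z} reduces, via the eigenvalue--eigenvector decomposition detailed in \cite{mesa2015scalable}, to diagonalizing a symmetric matrix and applying a closed-form scalar map to its eigenvalues, which simultaneously yields $Z_i = Q \tilde{Z}_i Q^T \succ 0$ and is itself linear algebra.

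Third, I would apply Theorem \ref{thm:BayesianLASSO:ADMM} to the single remaining update \eqref{eqn:ADMM:p}, which the theorem identifies with the $d$-dimensional Lasso problem \eqref{eqn:lassoADMM} after the Cholesky construction of $\hat{\Phi}$ obeying \eqref{eqn:defn:theorem:BayesLasso:a} and the completion of the square defining $\hat{y}$; this update is therefore solvable with any off-the-shelf Lasso solver. Finally, I would observe that the index $i$ ranges over the $N$ prior samples and that the updates \eqref{eqn:ADMM:F}--\eqref{eqn:ADMM:alpha}, in particular the $N$ Lasso subproblems, decouple across $i$ and so can be executed in parallel, with only the consensus step \eqref{eqn:ADMM:B} coupling them. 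Combining convergence of ADMM to the optimum \cite{boyd2011distributed} with the above classification of the updates establishes the claim.

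The main obstacle is not any single estimate but the bookkeeping needed to certify that \emph{every} update other than \eqref{eqn:ADMM:p} is genuinely linear algebra; the most delicate point is the semidefinite update \eqref{eqn:ADMM:Z}, where one must verify that the eigenvalue map both admits a valid closed form and preserves the constraint $Z_i \succ 0$ throughout the iterations. I would rely on \cite{mesa2015scalable} for that eigendecomposition and simply confirm that, specialized to the Bayesian Lasso, it introduces no new nonlinearity beyond the Lasso subproblem supplied by Theorem \ref{thm:BayesianLASSO:ADMM}.
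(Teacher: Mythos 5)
Your proposal is correct and follows essentially the same route as the paper: the paper treats the corollary as an immediate synthesis of the ADMM discussion (every update in \eqref{eqn:ADMM} is closed-form linear algebra, with \eqref{eqn:ADMM:Z} handled by the eigendecomposition of \cite{mesa2015scalable}) together with Theorem~\ref{thm:BayesianLASSO:ADMM}, which identifies the lone remaining update \eqref{eqn:ADMM:p} as the $d$-dimensional Lasso problem \eqref{eqn:lassoADMM}, solved in parallel over the $N$ samples. Your version simply makes this bookkeeping explicit, which the paper leaves implicit.
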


The procedure for Bayesian Lasso via measure transport is outlined in Algorithm \ref{blassoAlg}. 

\begin{algorithm} 
\caption{Parallelized Bayesian Lasso}\label{blassoAlg}
\SetKwInOut{Input}{Input}
\SetKwInOut{Output}{Output}
\underline{function BayesianLasso} $(\textbf{x}_1,...,\textbf{x}_N \in \mathbb{R}^d$, $\textbf{y} \in \mathbb{R}^n$, $\Phi \in \mathbb{R}^{n \times d}$, $\lambda$, $\rho$, $K$)\;
\Input{Samples $\textbf{x}_1,..., \textbf{x}_N$ from prior in \eqref{eqn:LaplacianPrior}}  
\Output{$B^{\infty}$ holds coefficients of map $S$ such that $S(x)=B^{\infty}A(x)$}
Construct $A_i$ and $J_i$ via Polynomial Chaos Expansion for $i=1,...,N$ as in \eqref{eqn:polynomialchoas:b} and \eqref{eqn:polynomialchoas:d}\;
Construct $\mathcal{M}$ as in \eqref{eqn:ADMM:inverse}\;
Initialize $B^0$ and $F_i^0, Z_i^0, p_i^0, \gamma_i^0, \beta_i^0, \alpha_i^0$ randomly for $i=1,...,N$\;  
\While{$B^k$ has not converged}{
  Update $B^{k+1}$ as in \eqref{eqn:ADMM:B} \;
  Update in parallel for $i=1,...,N$ $F_i^{k+1}, Z_i^{k+1}, \gamma_i^{k+1}, \beta_i^{k+1}, \alpha_i^{k+1}$ as in \eqref{eqn:ADMM} \\
  \qquad and $p_i^{k+1}$ with a $\textbf{Lasso solver}$ as in \eqref{eqn:lassoADMM} \;
  $k=k+1$ 
 }
 
\end{algorithm}

\section{Choosing $\lambda$ via Maximum Likelihood Estimation} \label{sec:EM}

The parameter of the standard Lasso in \eqref{eqn:l2l1}, $\lambda$,  can be chosen by cross-validation, generalized cross-validation, and ideas based on unbiased risk minimization \cite{tibshirani1996regression}. Park and Casella used empirical Bayes Gibbs sampling \cite{casella2001empirical} to find marginal maximum likelihood estimates of hyperparameters via EM algorithm \cite{park2008bayesian}.
This empirical scheme, however, is specific to the Gibbs sampler and the hierarchical model introduced in \cite{casella2001empirical}. Here, we propose an EM algorithm to calculate a maximum likelihood estimate of $\lambda$ according to the statistical model where $\tau$ and $\sigma^2$ (and thus $\lambda=2\tau \sigma^2$) are assumed fixed and non-random.  Without loss of generality, for the remainder of this section, we assume here that $\sigma^2 = \frac{1}{2}$ and so $\lambda=2\tau \sigma^2 \equiv \tau$.

As such, our statistical model is of the form $p(x,y;\tau) \equiv p(x,y;\lambda)$ where $p(x;\lambda)$ is a Laplacian density with parameter $\lambda$ and $p(y|x)=\mathcal{N}(y;\Phi x, \frac{1}{2} I_n)$.

In the EM framework, it is our objective to find
\begin{eqnarray}
\hat{\lambda} &=& \argmax_\lambda p(y;\lambda) \\
              &=& \argmax_\lambda \int  p(x,y;\lambda) dx
\end{eqnarray}
This can be performed by iteratively solving in the E-step for the posterior distribution with $\lambda^{(k)}$:
$g_k(x) \triangleq p(x|y;\lambda^{k})$.
Since we can generate i.i.d. posterior samples with our transport map $S^*$, define $(Z_i^{(k)}: i \geq 1)$ as i.i.d. samples from $p(x|y;\lambda^{k})$.
The M-step is found by solving for $\lambda^{(k+1)}$:
\begin{eqnarray}
 \lambda^{(k+1)} &=& \argmax_\lambda \mathbb{E}_{g_k}\left[\log p(X,y;\lambda)|Y=y\right]  \label{eqn:EMalg:M-step:general} \\
 &=& \argmax_\lambda d \log \lambda - \mathbb{E}_{g_k}\left[ \|X\|_1 |Y=y\right] 
 \label{eqn:EMalg:M-step:BayesLasso:a}\\
 &=& \frac{d}{\mathbb{E}_{g_k}\left[ \|X\|_1 |Y=y\right]}\label{eqn:EMalg:M-step:BayesLasso:b} \\
 &\simeq&  \frac{d}{\frac{1}{N} \sum_{i=1}^N \|Z_i^{(k)}\|_1 }  \label{eqn:EMalg:M-step:BayesLasso:c}
 \end{eqnarray}
 where \eqref{eqn:EMalg:M-step:BayesLasso:a} follows from \eqref{eqn:Likelihood} and \eqref{eqn:LaplacianPrior} with $\sigma^2=\frac{1}{2}$ and $\tau=\lambda$, 
 \eqref{eqn:EMalg:M-step:BayesLasso:b} follows from closed form minimization, and \eqref{eqn:EMalg:M-step:BayesLasso:c} follows from approximating the conditional expectation with an empirical expectation using i.i.d. $(Z_i^{(k)}: i \geq 1)$ from $g_k$.


Altogether, this becomes

\begin{enumerate}
\item Choose an initial $\lambda ^{(1)}$ and set $k=1$
\item (E-Step): Perform Algorithm 1  with $\lambda= \lambda^{(k)}$ to find $S$ and generate $N$ samples from the posterior distribution as $Z_j^{(k)}= S(X_j)$ for $j=1...N$  where $(X_i: i \geq 1)$ are i.i.d. samples from $p(x;\lambda^{(k)})$ in \eqref{eqn:LaplacianPrior}. 
\item (M-Step):  Update $\lambda^{k+1}$ according to \eqref{eqn:EMalg:M-step:BayesLasso:c} 
\item If convergence has occurred, stop; else let $k=k+1$ and return to step 2. 
\end{enumerate}



\section{Comparisons to Gibbs Sampling} \label{sec:comparison}

We now present comparisons of our measure transport methodology with a Gibbs sampler for Bayesian Lasso based on the diabetes data of Efron et al \cite{efron2004least}, which has $d=10$.  We will follow the analysis presented in \cite{park2008bayesian} and compare our results with the respective Gibbs sampler.  We note that our Bayesian Lasso model differs from that in \cite{park2008bayesian} in that we do not place a prior on $\sigma^2$. We first introduce and clarify models of Bayesian Lasso.  Then we compare regression estimates obtained using Gibbs sampling with those obtained using our methodology. 


\subsection{Models of Bayesian Lasso}

Park and Casella \cite{park2008bayesian} extend the Bayesian Lasso regression model to account for uncertainty in the hyperparameters by not only placing prior on $\tau$ in accordance with the interpretation of the Laplacian as a Gauss Scale mixture, but in addition placing a prior on $\sigma^2$.  Haans et al. \cite{hans2010model} introduced computational approaches to handle model uncertainty under the Bayesian Lasso regression model and provided an implementation to run Gibbs sampling for different regression models.  Here we give a summary of these varying Bayesian Lasso regression models. 

\subsubsection{Succinct Fixed Parameters}

A ``succinct'' fixed parameter model operates with $\tau$ (proportional to $\lambda$) and $\sigma ^2$ fixed and non-random. We denote the probability space induced by this model as $ (\Omega, \mathcal{F}, \mathbb{P})$. 
The generative model is expressed by \eqref{eqn:Likelihood} and \eqref{eqn:LaplacianPrior}.  Therefore, for any $\omega =(x,y) \in \Omega$, the posterior distribution $P_{X|Y=y}$ can be expressed in density form as
\begin{eqnarray}
p(x | y; \sigma^2, \tau) \propto p(y|x; \sigma^2) p(x; \tau) \\
\propto \exp \left( -\frac{1}{2\sigma^2} \|y-\Phi x\|_2^2 -\tau \|x\|_1 \right)
\end{eqnarray}
 In our Measure Transport approach, we operate on a fixed parameter model. We also refer to this model as the Bayesian-frequentist approach.  

\subsubsection{Fixed Parameter Gauss Scale Mixtures Model}
In order to develop Gibbs samplers, it is natural to interpret the Laplacian distribution as a scale mixture of Gaussians.


We denote the probability space induced when  $\tau$ and $\sigma^2$ are still non-random and 
the prior \eqref{eqn:LaplacianPrior} is  represented as a Gauss-scale mixture with latent random variables $(t_1^2,\ldots,t_d^2)$ as $(\tilde{\Omega}, \tilde{\mathcal{F}}, \tilde{\mathbb{P}})$ . For any $\tilde{\omega}=(x,y, t_1^2,...,t_2^2) \in \tilde{\Omega}$, the posterior distribution on $X$ given $Y$ and latent variables
\begin{equation}
p(x|y, t_1^2,...,t_d^2; \sigma^2, \tau)
\end{equation}
is proportional to the product of the following probabilities 
\begin{eqnarray}
p(\vect{y}| \vect{x}; \sigma^2, \Phi) &=& \mathcal{N}_n(  \Phi \vect{x}, \sigma^2 \vect{I}_n ) \\
p(\vect{x}| t_1^2,..., t_d^2)  &=& \mathcal{N}_d(\vect{0}, D_t) \quad D_t=\text{diag}(t_1^2,...,t_d^2)\\
p(t_1^2,...,t_d^2; \tau) &=& \prod_{j=1}^d \frac{\tau^2}{2} e^{-\tau^2 t_j^2/2}dt_j^2
\end{eqnarray}

Note that for any $A \in \tilde{\mathcal{F}}$ for which $A \in \mathcal{F}$, for instance $A=\left\{\|X\|_2^2 > 3, Y \in [2,2.01]\right\}$, it follows that $\mathbb{P}(A)=\tilde{ \mathbb{P}}(A)$.  In other words, $\mathbb{P}$ is the restriction of $\tilde{\mathbb{P}}$ to $\mathcal{F}$: $\mathbb{P}=\tilde{\mathbb{P}}_{|\mathcal{F}}$. Thus this is also a Bayesian-frequentist approach; its primary purpose is for using latent variables for Gibbs sampling.

\subsubsection{Prior on $\sigma^2$ }
Another Bayesian perspective formulated by
Park and Casella \cite{park2008bayesian}
 treats $\sigma^2$ as a random variable with a prior density $\pi(\sigma^2)$, inducing a probability space $(\Omega' , \mathcal{F}', \mathbb{P}')$. 
 The prior on $X$ given $\sigma^2$ is the standard Laplacian
\begin{equation}\label{eqn:parkprior}
p(x|\sigma^2;\tau)= \left(\frac{\tau}{2\sigma}\right)^d
\exp \left(-\frac{\tau}{\sigma} \|x\|_1 \right)
\end{equation}
that is, the penalty parameter is now scaled by the squared root of the error variance.  
This gives rise to the following hierarchical representation of the full model: 
\begin{eqnarray}
p(y| x, \sigma^2; \tau) &=& \mathcal{N}_n(  \Phi x, \sigma^2 I_n ) \\
p(x| t_1^2,..., t_d^2,  \sigma^2)  &=& \mathcal{N}_d(0, \sigma^2 D_t) \quad D_t=\text{diag}(t_1^2,...,t_d^2)\\
p(t_1^2,...,t_d^2; \tau) &=& \prod_{j=1}^d \frac{\tau^2}{2} e^{-\tau^2 t_j^2/2}dt_j^2\\
p(\sigma^2) &=& \pi(\sigma^2)
\end{eqnarray}
 For any $\omega'=(x, y, t_1^2,...,t_d^2, \sigma^2) \in \Omega' $ the posterior distribution on $(x,\sigma^2)$ given y $y$  takes the form: 
\begin{eqnarray}
&p(x, \sigma^2 | y ; \tau)  \propto  \\
&\pi(\sigma^2) (\sigma^2)^{- (n-1)/2} \exp -\frac{1}{2\sigma^2} \|y - \Phi x\|_2^2  -\tau \|x\|_1
\end{eqnarray}

\subsection{Analysis on Diabetes Data}
We analyze a diabetes data set \cite{efron2004least} and compare results when using the Gibbs sampler presented in \cite{park2008bayesian} which utilizes a prior on $\sigma^2$ (operating on $(\Omega' , \mathcal{F}', \mathbb{P}')$) and when using samples from our transport based methodology (operating on  $(\Omega, \mathcal{F}, \mathbb{P})$). We show that despite our treatment of $\sigma^2$ as fixed, we achieve similar results as we capture the complexity of the posterior distribution in this real dataset.  

Figure \ref{fig:lassopaths} compares our measure transport Bayesian Lasso posterior median estimates \ref{fig:sub3} with the ordinary Lasso \ref{fig:sub1} and the Gibbs sampler posterior median estimates \ref{fig:sub2}. We take the vector of posterior medians as the one that minimizes the $L_1$ norm loss averaged over the posterior.  For all three methods, the estimates were computed by sweeping over a grid of values for $\lambda$.  We implemented our measure transport Bayesian Lasso with a Polynomial Chaos Expansion order of 3, and trained with $N=500$ prior samples to compute a transport map.  The specifications for the Gibbs sampler were to use a scale-invariant prior on $\sigma ^2$ and to run for 10,000 iterations after 1000 iterations of burn-in.

Figure \ref{fig:sub3} shows the resulting optimal $\lambda$ (depicted with a vertical line) found by the EM algorithm presented in Section \ref{sec:EM}. The vertical line in Figure \ref{fig:sub2} is the optimal $\lambda$ found by \cite{park2008bayesian} (with prior on $\sigma^2$) by running a Monte Carlo EM algorithm  corresponding to the particular Gibbs implementation. The vertical line in the Lasso graph \ref{fig:sub1} represents the estimate chosen by n-fold cross validation.

Despite treating $\sigma^2$ as fixed, the $L_1$ paths are very similar to the Bayesian Lasso imparted with a prior on $\sigma^2$.  As already noted in previous work, the Bayesian Lasso paths are smoother than the Lasso estimates.

\begin{figure*}
\centering
\begin{subfigure}{.33\textwidth}
  \centering
  \includegraphics[width=\linewidth]{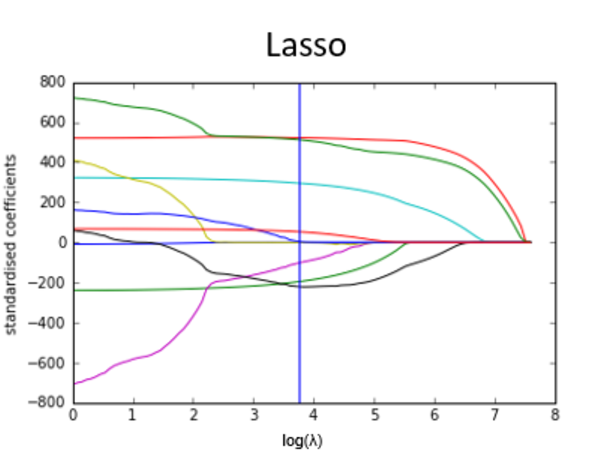}
  \caption{ }
  \label{fig:sub1}
\end{subfigure}%
\begin{subfigure}{.33\textwidth}
  \centering
  \includegraphics[width=\linewidth]{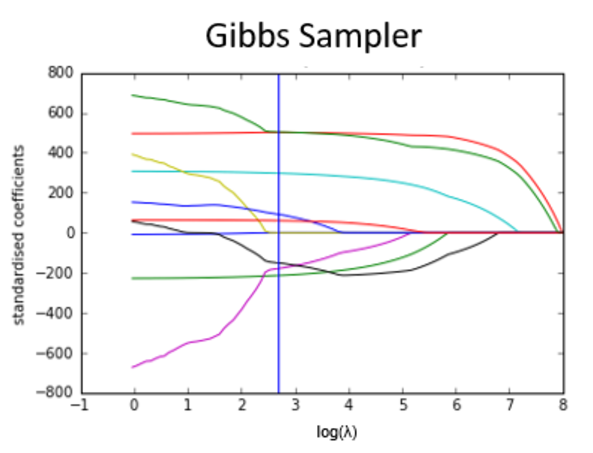}
  \caption{ }
  \label{fig:sub2}
\end{subfigure}
\begin{subfigure}{.33\textwidth}
\centering
  \includegraphics[width=\linewidth]{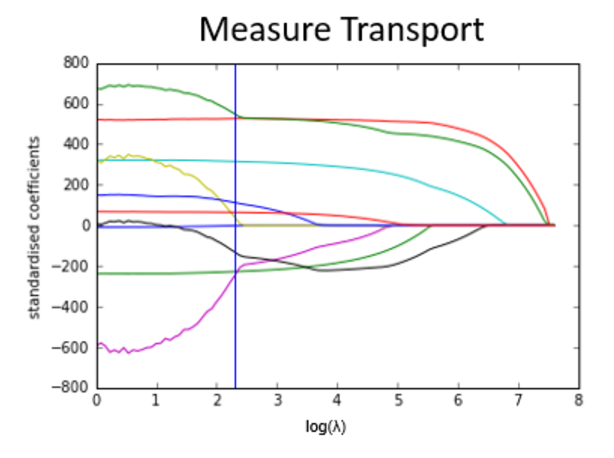}
  \caption{ }
  \label{fig:sub3}
\end{subfigure}
\caption{$\textbf{Comparison of Linear Regression Estimates on Diabetes Data}$ trace plots for estimates of the diabetes data regression parameters for (a) Lasso (b) Gibbs sampler Bayesian Lasso; (c) our measure transport Bayesian Lasso method. The vertical line represents the $\lambda$ estimate found by cross-validation, EM for Gibbs sampling, and EM for measure transport method respectively.  }
\label{fig:lassopaths}
\end{figure*}



   
   We further compare the 95\% credible intervals for the diabetes data obtained with a fixed $\lambda$ (the optimal $\lambda$ corresponding to the Gibbs sampler) for the marginal posterior distributions of the Bayesian Lasso estimates. Figure \ref{fig:confidence_int} shows the corresponding result for the Lasso, Gibbs sampler, and our proposed methodology. The Gibbs sampler credible intervals are wider in comparison due to the treatment of $\sigma^2$ as random.  
   
   Figure \ref{fig:kdes} shows  Kernel  Density  Estimates of  two  of  the regression  variables obtained by Gibbs sampling and sampling with a transport map.  The density estimates obtained through a Gibbs sampler operating on $(\tilde{\Omega}, \tilde{\mathcal{F}}, \tilde{\mathbb{P}})$ are similar  in shape and support to those from a transport map, verifying the convergence to the posterior distribution in both methods.  The small differences seen in the estimates might be due to the polynomial nature of the transport map. We also show kernel density estimates obtained with a Gibbs sampler operating on $(\Omega' , \mathcal{F}', \mathbb{P}')$ for comparison.

\begin{figure}
\centering
\includegraphics[width=\linewidth]{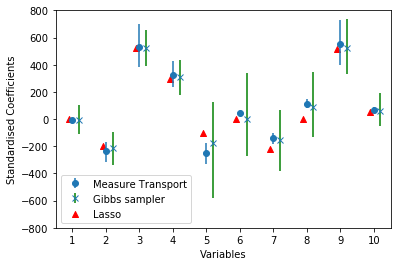}
\caption{Posterior median Bayesian Lasso estimates and corresponding 95 percent credible intervals for a Gibbs sampler and our Measure Transport methodology.  Lasso estimates are also shown for comparison.}
\label{fig:confidence_int}
\end{figure}

\begin{figure*}
\centering
\begin{subfigure}{.5\textwidth}
  \centering
  \includegraphics[width=\linewidth]{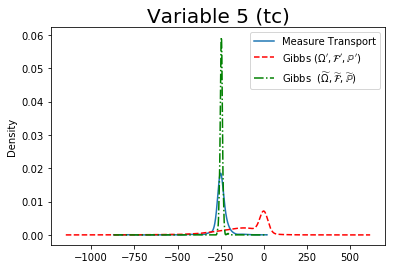}
  \caption{ }
  \label{fig:kdesub1}
\end{subfigure}%
\begin{subfigure}{.5\textwidth}
  \centering
  \includegraphics[width=\linewidth]{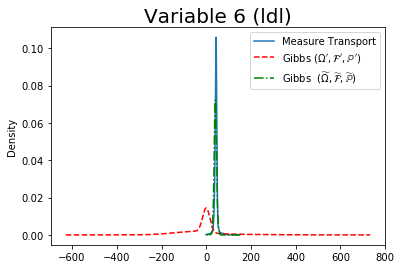}
  \caption{ }
  \label{fig:kdesub2}
\end{subfigure}
\caption{Marginal posterior density estimates for variables 5 and 6 of the Diabetes dataset.  Kernel density estimates were constructed using 10,000 samples from a Gibbs sampler or a transport map respectively. }
\label{fig:kdes}
\end{figure*}

\section{Parallelized Implementation and Applications} \label{sec:applications}

The fundamentally parallel nature of our Bayesian Lasso formulation allows for solution implementation on a variety of platforms. The fact that our solution relies solely on linear algebra and Lasso solvers allows for it to be deployed in a variety of architectures for parallel computing.  
In order to leverage the parallel nature of the algorithm presented above, we here present implementation with a Iterative-Reweighted Least Squares (IRLS) Lasso solver implemented in a Graphics Processing Unit (GPU) solution. 



\subsection{IRLS solver within a GPU Implementation}
  In the last several years, GPUs have gained significant attention for their parallel programmability. In this work, we made use of the ArrayFire library that abstracts low-level GPU programming and provides highly parallelized and optimized linear algebra algorithms \cite{malcolm2012arrayfire}.   

 
 We implemented Algorithm \ref{blassoAlg} using ArrayFire.  To solve $N$ Lasso problems of \eqref{eqn:l2l1_2} we implemented a generalized iterative re-weighted least-squares (GIRLS) \cite{bissantz2009convergence} algorithm. The GIRLS algorithm requires solving only least-squares sub-problems with linear algebra operations thus facilitating its implementation in ArrayFire. 






Figure \ref{fig:gputiming} shows execution times of computing a transport map with $N=500$ and PCE order of 3 running a Python implementation on an Intel Core i7 processor at 2.40 GHz(4 CPUs) and running with the ArrayFire implementation on an NVIDIA GeForce 840M GPU. As the complexity of the problem increases (determined by increasing $d$), the ArrayFire implementation readily outperforms the Python implementation.  This showcases the future possibilities for rapid computation of transport maps on architectures that feature parallelization capabilities.

\begin{figure}
\centering
\includegraphics[width=0.5\linewidth]{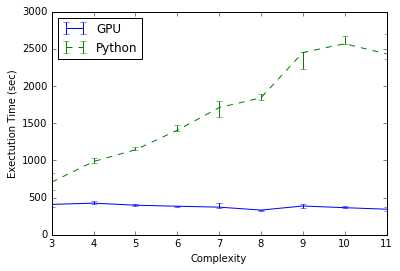}
\caption{Execution times for computing a transport map in Python and using a GPU. The horizontal axis represents the dimension of the latent random variable $X$.}
\label{fig:gputiming}
\end{figure}


\section{Discussion and Conclusion} \label{sec:discussion-conclusion}
We have shown that an i.i.d. posterior Bayesian Lasso sampler can be constructed with a measure transport framework with iteratively solving in a collection of standard LASSO problems in parallel and performing closed-form linear algebra updates.  
This formulation allows one to leverage the diversity of Lasso solvers to sample from posterior.  For example, we show how posterior Bayesian Lasso transport samplers can be constructed with a GPU. We also note that this algorithm could be readily implemented in other systems for parallelization such as cloud computing.   

Another potential application for inference with this transport-based approach is within the context of the Internet-of-Things (e.g. wearable electronics).  In these settings, energy efficiency is of paramount importance, and wireless transmission usually is the most energy-consuming.  Developing a framework such as ours where inference is performed on chip, obviates the need to transmit collected waveforms.  Instead, one only needs to transmit information about the posterior distribution, which is a sufficient statistic for any Bayesian decision making problem.  In our case, this boils down to transmission of coefficients of the polynomials representing the transport map.  From there, in the cloud for instance, i.i.d. prior samples may be transformed into i.i.d. posterior samples. Figure \ref{fig:analogtoinformation} shows a potential use of our posterior transmission scheme and a comparison to current transmission schemes.

\begin{figure}[htbp]
\setlength{\abovecaptionskip}{1pt plus 1pt minus 1pt}  
\centerline{\includegraphics[width=0.43\textwidth]{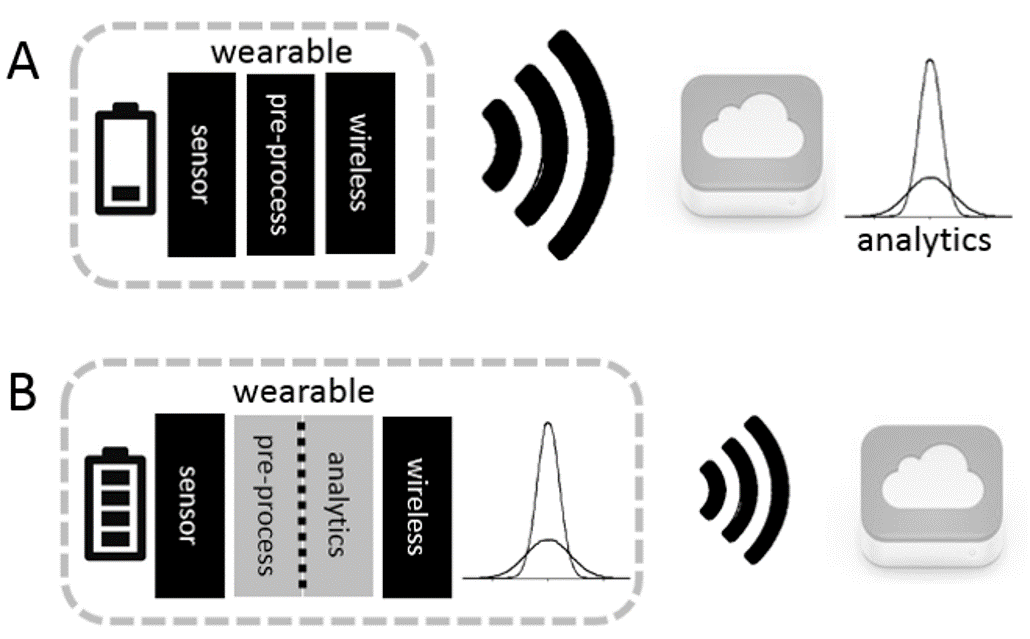}}
\caption{(A) shows conventional  wireless transmission schemes where signals are acquired and wirelessly transmitted; (B) shows our proposed scheme where inference is performed locally and only the posterior distribution is transmitted.  }
\label{fig:analogtoinformation}
\end{figure}

Another energy-efficient application could be achieved by implementing our Bayesian Lasso algorithm in analog systems. The Local Competitive Algorithm (LCA) first presented in \cite{rozell2008sparse} is an analog dynamical system inspired by neural image processing and exactly solves \eqref{eqn:l2l1}. This system has already been implemented in field-programmable analog arrays \cite{shapero2012low} and integrate-and-fire neurons \cite{gruenert2014understanding}, thus showing promising results for reduced energy in hardware implementations. 


In the LCA,  a set of parallel nodes, each associated with an element of the basis $\Phi_m \in \Phi$, compete with each other for representation of the input. The dynamics of LCA are expressed by a set of non-linear ordinary differential equations (ODEs) which represent simple analog components. The system's steady-state is the solution to \eqref{eqn:l2l1}.  Using the formulation presented in Theorem ~\ref{thm:BayesianLASSO:ADMM}, we could solve \eqref{eqn:ADMM:p} by presenting the LCA dynamics in terms of $\hat{y}$ and $\hat{\Phi}$: \begin{eqnarray*}
\dot{u}_m(t)&=&\frac{1}{\tau} \big[\langle \hat{\Phi}_m,  \hat{y} \rangle- u_m(t) - \sum\limits_{n \neq m}\langle \hat{\Phi}_m, \hat{\Phi}_n \rangle a_n(t) \big] \\
a_n(t) &\triangleq& T_{\lambda}(u_n(t)) =\max(0, u_n(t)-\lambda)
\end{eqnarray*}
where  $\hat{\Phi}_m$ denotes the $m$th column of $\hat{\Phi}$ and $T_{\lambda}$ is a thresholding function that induces local non-linear competition between nodes.

We have presented a framework to find a posterior for the Bayesian Lasso, however this parallelizable formulation could be easily extended to other $L_1$ priors and sparsity problems.  Dynamic formulations for spectrotemporal estimation of time series \cite{schamberg2017modularized} could be extended to a fully-Bayesian perspective to enable improved statistical inference and decision making.


\section*{Acknowledgments}
M. Mendoza acknowledges support from the NSF Graduate Research Fellowship.  T.P. Coleman acknowledges support from the NSF Center for Science of Information under Grant CCF-0939370, 
NSF grant IIS- 1522125, NIH grants 1R01MH110514, and ARO MURI grant ARO-W911NF-15-1-0479.

\bibliography{BayesLASSO}{}
\bibliographystyle{plain}

\end{document}